\documentclass[11pt]{article} 
\usepackage{tda2}

%------------------------------ Text -------------------------------------
\title{\vspace{-5ex}Catching Polygons}
\date{\vspace{-5ex}}
\author{Bradley McCoy \thanks{School of Computing, Montana State U.
} \and Eli Quist \thanks{Dept. of Mathematical
    Sciences, Montana State U.     \newline {\scriptsize {\tt bradleymccoy@montana.edu, eli.quist@student.montana.edu,
    annaschenfisch@montana.edu} }} \and
    Anna Schenfisch\footnotemark[2] }

%\authorrunning{B. \ McCoy, E. \ Quist, A. \ Schenfisch}
%optional. First: Use abbreviated first/middle names.  Second (only in severe
%cases): Use first author plus 'et. al.'

%\Copyright{McCoy, Quist, Schenfisch}

%\subjclass{Theory of computing $\rightarrow$ Randomness, geometry and discrete
%structures $\rightarrow$ Computational geometry, Mathematics of Computing
%$\rightarrow$ Geometric topology}

%\ccsdesc[100]{Computational Geometry~Mathematics of Computing}
%\ccsdesc[100]{Geometric topology}

%\keywords{\todo{}}

%\acknowledgements{}%optional

%\EventLongTitle{Young Researcher's Forum, 2021}
%optional

%\EventShortTitle{YRF 2021}
%\EventYear{2021}
%\ArticleNo{00}

\begin{document}
\thispagestyle{empty}
\maketitle

\begin{abstract}
Consider an arrangement of $k$ lines intersecting the unit square.  There is
some minimum scaling factor so that any placement of a rectangle with aspect
ratio $1 \times p$ with $p\geq 1$ must non-transversely intersect some portion
of the arrangement or unit square. Assuming the lines of the arrangement are
axis-aligned, we show the optimal arrangement depends on the aspect ratio of the
rectangle. In particular, the optimal arrangement is either evenly spaced parallel
lines or an evenly spaced grid of lines.  We present the precise aspect ratios
of rectangles for which each of the two nets are optimal.
\end{abstract}

%\linenumbers
%----------------------- Introduction -------------------------------------
\section{Introduction}
\label{sec:intro}

During the open problem session of CCCG20, Joseph O'Rourke suggested the
following problem.  Consider an arrangement of $k$ lines, all of which intersect
the unit square. For a fixed polygon $P$, there is some minimum scale factor $c
> 0$ such that the scaled polygon $cP$ cannot be embedded in the unit square
without intersecting any of the $k$ lines of the arrangement non-transversely
(allowing translation and rotation).
That is, the scaled polygon will `just touch' at least one line of the
arrangement.  Can we compute the minimum such $c$ over all possible arrangements
of this type?  Can we describe an arrangement that realizes this minimum?  See
\figref{intro} for an example. 

This problem can be described using an analogy to tripwire lasers.
In this analogy, the polygon is an intruder in the unit square.
The intruder can vary in size but always has the same shape.
The goal is to minimize the size of the intruder that avoids the net of lasers.

\begin{figure}[htb]
    \begin{minipage}[l]{0.67\textwidth} 
        \includegraphics[scale=0.7]{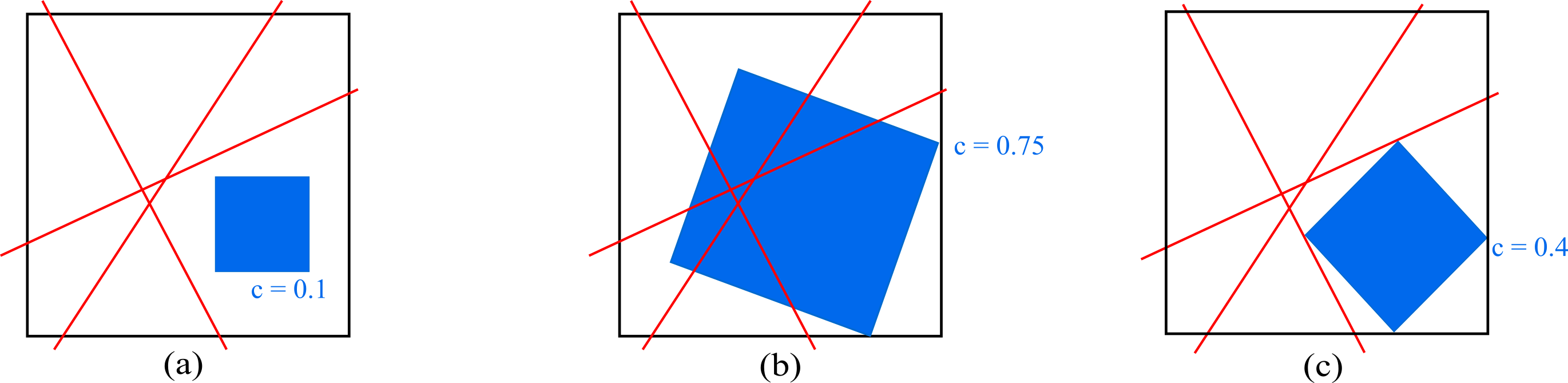}
    \end{minipage}\hfill
    \begin{minipage}[c]{0.3\textwidth} 
	\caption{(a) If $c$ is small, the polygon can avoid the lasers.
        (b) If $c$ is large, any rotation and translation of the polygon is
        caught in the lasers.
        (c) The minimum $c$ such that the polygon is caught in
        the lasers. }
			\label{fig:intro}
    \end{minipage}
\end{figure}

Here, we consider the special case where the lines are
axis-aligned and $P$ is a rectangle (a lemma towards justifying the setting of
axis-aligned lines is given in \appendref{append-axis}). 
We observe that, depending on the aspect ratio
of the rectangle being considered, the optimal arrangement is either evenly
spaced parallel lines or a grid of lines. See \figref{intro-2} for an example.
O'Rourke asked, \emph{for what aspect ratios of rectangles is each of the two nets
optimal?}
In this work, we answer this question precisely.
We are only aware of one work that directly considers this problem \cite{aghamolaei-capfwamn-20}, 
however the author has a 
different interpretation of the problem.
Similar laser based
localization problems are considered in \cite{ArkinD0GMPT20,boundary-lasers}.
Using the results from \cite{ADS-95},
given any net and aspect ratio for the intruder one can compute 
the optimal scale factor.

\begin{figure}[htb]
    \begin{minipage}[l]{0.55\textwidth} 
        \includegraphics[scale=.5]{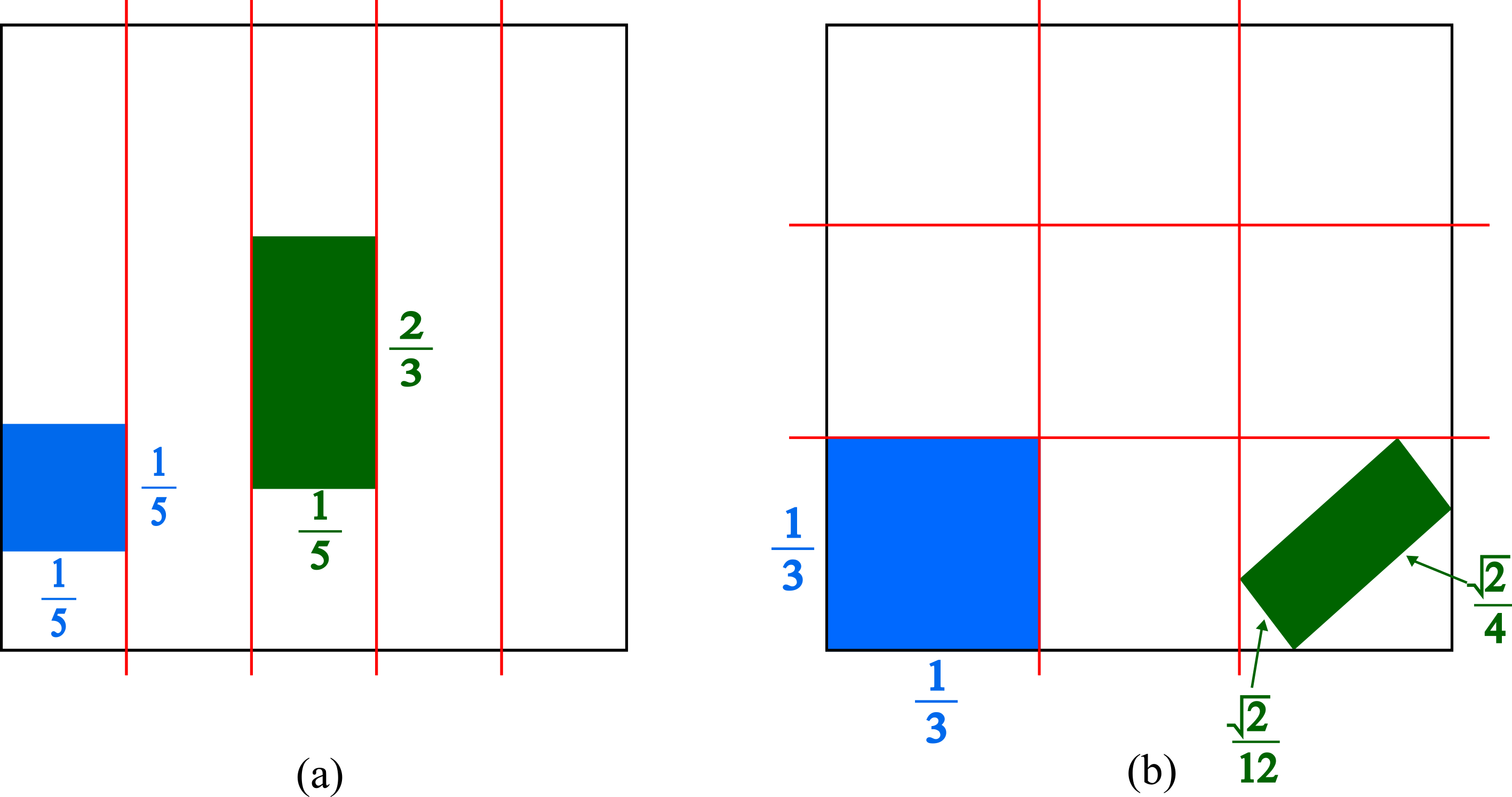}
    \end{minipage}
    \begin{minipage}{0.45\textwidth}
        \caption{(a) For a square the evenly spaced vertical lines are optimal
        but for a rectangle with a $(1\times 3)$ aspect ratio evenly spaced
        vertical lines is not optimal.
	(b) For a square the evenly spaced grid is not optimal but for
	a $(1\times 3)$ rectangle the evenly spaced grid is optimal.}
		\label{fig:intro-2}
    \end{minipage}
\end{figure}

\section{Rectangular Intruders in Rectangular Nets}

\label{sec:curve}

When $P$ is a rectangle and the $k$ lines are axis-aligned,
the problem of calculating the optimal scale factor $c$
reduces to finding the largest rectangle inscribed inside of another
rectangle. Inscribing rectangles inside rectangles has
been studied in \cite{Carver-1957,Dunkel-1920,wetzel-00}.
In this section, we construct a curve that describes
the optimal scale factor $c$ for rectangular intruders in rectangular nets for
all aspect ratios of the inscribed rectangle.

We fix the aspect ratio of the hole in the net
to be $1\times n$ with $n\geq 1.$
Let the aspect ratio of the intruder be $1\times p$ with $p\geq 1.$
We express the optimal scale factor 
$c$ as a function of the aspect ratio
$p$. We denote this curve by $\curve$. 
See \figref{curve} for an example.

\begin{figure}[htb]
        \centering
    
        \includegraphics[width=.4\textwidth]{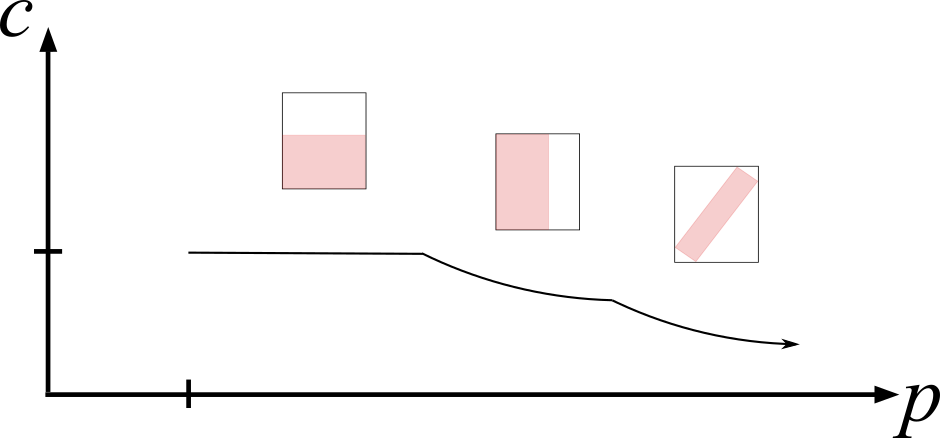}

		\caption{The inscribing curve for some $n \in \R^{\geq 1}$.
		The $x$-axis is the aspect ration of the intruder and the $y$-axis
		is the optimal scale factor.}
			\label{fig:curve}
\end{figure}

The curve $\curve$ consists of three parts.
For small $p\leq n$, placing the shorter side of the intruder
parallel to the shorter side of the net is optimal and
$c=1.$
For medium size $p$ values relative to $n$,
the value of $c$ is limited by the height of the net and the 
longer side of the intruder is scaled to equal the longer side of the net.
This gives $pc=n$ or $c=\frac{n}{p}$.

\begin{wrapfigure}{l}{0.5\textwidth}
    \centering
    \raisebox{0pt}[\dimexpr\height-1\baselineskip\relax]
    {\includegraphics[width=.25\textwidth]{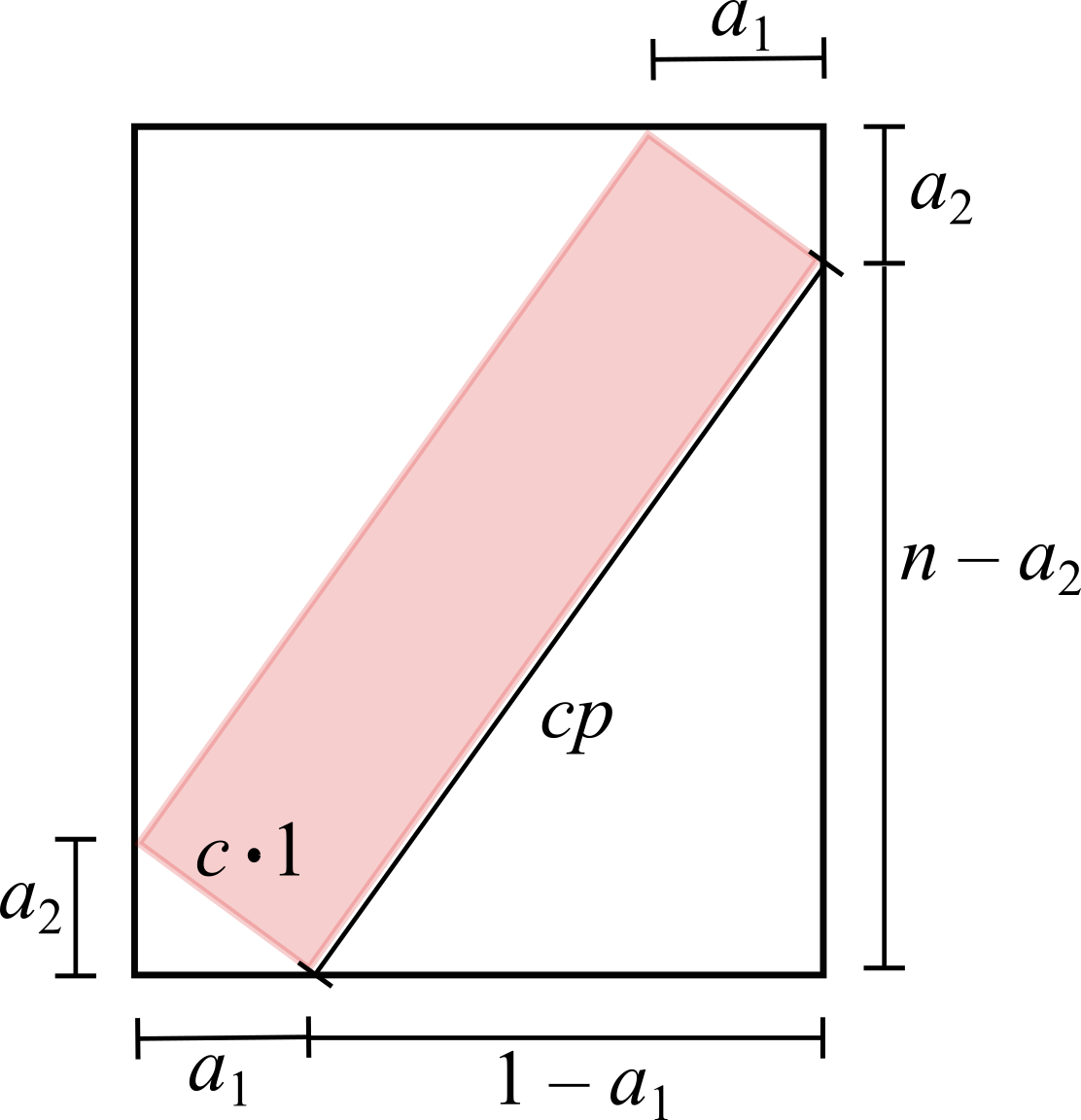}}
                \caption{A diagonally inscribed rectangle.\label{fig:diagfish}}
			
                        \vspace{-1.7cm}
\end{wrapfigure}
For large $p$ relative to $n$, placing the intruder diagonally is optimal.
Using the variables indicated in \figref{diagfish}, we have
the following three equations

\begin{align}
\centering
    &\frac{a_1}{a_2}=\frac{n-a_2}{1-a_1}\label{eqn:similar}\\
       &a_1^2+a_2^2=c^2\label{eqn:right}\\
      &(1-a_1)^2+(n-a_2)^2=(cp)^2\label{eqn:ceepee}
\end{align}
along with the natural constraints of the problem, $0<c,\ 0<a_1<1,\ 0<a_2<n$. 
\eqnref{similar} is due to all triangles being similar. \eqnref{right} and \eqnref{ceepee} are
applications of Pythagorean's~theorem.

Solving for $c$ gives
$$c=\sqrt{\left(\frac{n-p}{1-p^2}\right)^2(p^2+1)-2\left(\frac{n-p}{1-p^2}\right)p+1}.$$

Therefore, we are able to give an explicit formula for $\curve$
for a given $n$, namely,
\[ \curve = \begin{cases} 
      1 & 1\leq p\leq n \\
      \frac{n}{p} & n<p\leq w_n \\
      \sqrt{\left(\frac{n-p}{1-p^2}\right)^2(p^2+1)-2\left(\frac{n-p}{1-p^2}\right)p+1} & w_n<p
   \end{cases}
\]
where $w_n$ is the solution to 
$\sqrt{\left(\frac{n-p}{1-p^2}\right)^2(p^2+1)-2\left(\frac{n-p}{1-p^2}\right)p+1}=\frac{n}{p}$ for $1<n<p.$
The value $w_n$ represents the scale factor where
the vertically inscribed rectangle has the same scale factor as the diagonally inscribed rectangle.

When $k$ is even, the grid with $\frac{k}{2}$ horizontal lines and $\frac{k}{2}$ 
vertical lines  has 
square holes with side length $\frac{1}{\frac{k}{2}+1}$.
The arrangement with $k$ vertical and $0$ horizontal lines has 
rectangular holes with dimension $1\times \frac{1}{k+1}$.
The curves $\frac{1}{\frac{k}{2}+1} \mathcal{C}_1(p)$ and
$\frac{1}{k+1}\mathcal{C}_{k+1}(p)$ intersect at $p=\frac{k+1}{\frac{k}{2}+1}.$
 %See \figref{even}.
 For even values of $k$,
we define the \emph{base curve} to be

\[ B_k(p) =\min\bigg\{\frac{1}{k+1}\mathcal{C}_{k+1}(p), \frac{1}{\frac{k}{2}+1} \mathcal{C}_1(p) \bigg\}
\]

The case for odd values of $k$ is similar and included in \appendref{odd}.

%\begin{figure}[htb]
 %       \centering
    
 %       \includegraphics[width=.4\textwidth]{}

%		\caption{The minimum curve for $k$ even, with $p_1 = \frac{k+1}{\frac{k}{2}+1}$.}
%			\label{fig:even}
%\end{figure}

\section{Optimality Results}
\label{sec:optimal}
In this section, we show that for any axis-aligned net $\mathcal{N}$
with an even number of lines
and any aspect ratio of the intruder $p$, the base
curve has a scale factor that is less than or equal 
to the scale factor of $\mathcal{N}$.
Let $\mathcal{N}(k,0)$ denote the net with $k$ evenly spaced
parallel lines and let $\mathcal{N}(\frac{k}{2},\frac{k}{2})$ denote the
net with $\frac{k}{2}$ evenly spaced vertical lines and $\frac{k}{2}$
evenly spaced horizontal lines.
The rectangle aspect ratio where the base curve
switches from $\mathcal{N}(k,0)$ to $\mathcal{N}(\frac{k}{2},\frac{k}{2})$
is $p=\frac{k+1}{\frac{k}{2}+1}$. We now state our
main theorem:

\begin{theorem}[Regular is Optimal]\label{thm:optimal}
For $k$ even, the optimal axis aligned net for rectangular polygons is
$\mathcal{N}(k,0)$ for aspect ratio $p\leq\frac{k+1}{\frac{k}{2}+1}$
and $\mathcal{N}(\frac{k}{2},\frac{k}{2})$ for aspect ratio $p\geq\frac{k+1}{\frac{k}{2}+1}$.
\end{theorem}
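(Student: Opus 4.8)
The plan is to reduce the optimization in two stages and then isolate a single comparison in the diagonal regime as the crux. First I would reduce from arbitrary axis-aligned nets to evenly-spaced grids. A net with $v$ vertical and $h$ horizontal lines cuts the unit square into a grid of rectangular cells, and since the intruder is convex while the lines are full chords, a rotated/translated copy avoids the net exactly when it lies strictly inside one cell; hence the scale factor of a net equals the maximum, over its cells, of the largest inscribed $1\times p$ rectangle in that cell, which I denote $I(a,b,p)$ for a cell with side lengths $a\le b$ and which equals $a\,\mathcal C_{b/a}(p)$ and is monotone nondecreasing in each of $a,b$. For a fixed split $(v,h)$, a pigeonhole argument yields a cell of width $\ge\tfrac1{v+1}$ and height $\ge\tfrac1{h+1}$, so by monotonicity its value is at least that of the evenly-spaced grid (whose cells are all $\tfrac1{v+1}\times\tfrac1{h+1}$); thus even spacing minimizes the scale factor within each split. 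It then suffices to compare the grids $\mathcal N(v,h)$ with $v+h=k$, and by the rotational symmetry $I(a,b,p)=I(b,a,p)$ I may assume $v\ge h$, so that the scale factor is $\tfrac1{v+1}\mathcal C_{(v+1)/(h+1)}(p)$, with $\mathcal N(k,0)$ giving $\tfrac1{k+1}\mathcal C_{k+1}(p)$ and $\mathcal N(\tfrac k2,\tfrac k2)$ giving $\tfrac1{k/2+1}\mathcal C_1(p)$.

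Next I would dispose of every intermediate split $1\le h<\tfrac k2<v\le k-1$ on the small-$p$ side using only an axis-aligned placement. Fitting the intruder with its sides parallel to the cell gives $I(\tfrac1{v+1},\tfrac1{h+1},p)\ge\min\{\tfrac1{v+1},\tfrac1{(h+1)p}\}$, and for $p\le\tfrac{k+1}{h+1}$ both terms are at least $\tfrac1{k+1}$, which is exactly the scale factor $\tfrac1{k+1}\mathcal C_{k+1}(p)$ of $\mathcal N(k,0)$ in this range (here $p\le k+1$, so $\mathcal C_{k+1}(p)=1$). Since $\tfrac{k+1}{h+1}>p^{*}:=\tfrac{k+1}{k/2+1}$ for every intermediate $h$, this already shows each intermediate net is dominated by $\mathcal N(k,0)$ throughout $p\le\tfrac{k+1}{h+1}$, and in particular that $\mathcal N(k,0)$ realizes the base curve for $p\le p^{*}$.

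It remains to treat $p>\tfrac{k+1}{h+1}$, where I would compare each intermediate net directly with the balanced grid and prove $I(\tfrac1{v+1},\tfrac1{h+1},p)\ge I(\tfrac1{k/2+1},\tfrac1{k/2+1},p)=\tfrac1{k/2+1}\mathcal C_1(p)$; together with the previous paragraph and the crossing-point identity $\tfrac1{k+1}\mathcal C_{k+1}(p^{*})=\tfrac1{k/2+1}\mathcal C_1(p^{*})$ this gives $\mathcal N(v,h)\ge B_k$ for all $p$ and pins the switch at $p^{*}$. When both cells lie in the medium branch $\mathcal C_n(p)=n/p$, the intermediate value collapses to $\tfrac1{(h+1)p}\ge\tfrac{1}{(k/2)p}>\tfrac1{(k/2+1)p}$ and the inequality is immediate. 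The genuine obstacle is the diagonal branch (including the mixed case in which only the balanced square cell is diagonal), where $\mathcal C$ is the square-root expression obtained from \eqnref{similar}, \eqnref{right}, \eqnref{ceepee}: here the intermediate cell and the balanced square are incomparable — the intermediate cell is strictly shorter in one side and strictly longer in the other — so monotonicity of $I$ no longer applies.

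My plan for this last branch is to prove a comparison lemma stating that, among all cells whose side lengths $x^{-1},y^{-1}$ satisfy $x+y=k+2$ with $x\ge y$, the diagonally inscribed value is nonincreasing as $(x,y)$ moves toward balance. The two quantities that control a diagonally inscribed rectangle are the cell area $\tfrac1{(v+1)(h+1)}$ and the squared diagonal $\tfrac1{(v+1)^2}+\tfrac1{(h+1)^2}$, and balancing strictly minimizes both — the first because $(v+1)(h+1)$ is maximized at equality under the fixed sum $(v+1)+(h+1)=k+2$, and the second by a one-line derivative check showing $\tfrac1{x^2}+\tfrac1{(k+2-x)^2}$ is increasing for $x>\tfrac{k+2}2$. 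The main technical step, and the part I expect to require the most care, is upgrading these two facts into the full inequality for the explicit square-root formula — equivalently, showing that the diagonal expression, reparametrised by the split along $x+y=k+2$, has no interior value below the balanced endpoint — since unlike the axis-aligned and medium-branch cases it does not reduce to a single controlling side length.
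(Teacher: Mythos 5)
Your first stage --- the pigeonhole reduction to the average-size cell, monotonicity of the inscribed value, and hence reduction to evenly spaced grids --- matches the paper's opening argument, and your treatment of the small-$p$ and medium regimes is correct (in fact cleaner than the paper's, whose second-branch line contains a typo). The genuine gap is that the proposal never proves the diagonal-regime comparison, and that comparison is the actual content of the theorem. You correctly isolate the needed claim --- that the diagonally inscribed value in a $\frac{1}{v+1}\times\frac{1}{h+1}$ cell is at least that of the balanced $\frac{1}{k/2+1}\times\frac{1}{k/2+1}$ cell --- and you correctly observe that monotonicity of $I$ cannot settle it because the two cells are incomparable. But what you offer in its place is heuristic support, not a proof: the facts that balance minimizes the cell area and the squared diagonal do not imply the inequality, since the diagonal inscribed value given by the square-root formula is neither a function of, nor monotone in, those two quantities --- and you say yourself that upgrading them to the full inequality is the step requiring the most care. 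A plan whose crux is deferred is not a proof of the theorem.

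For contrast, the paper closes this step with a device worth adopting: rather than comparing the two optimal inscriptions directly, fix the balanced grid's diagonal inscription --- scale factor $c'$ with legs $a_1,a_2$ satisfying the similar-triangle and Pythagorean constraints --- and consider the rectangle with short side $c'$ placed diagonally with those legs in the $\frac{1}{v+1}\times\frac{1}{h+1}$ cell, whose squared length is $\left(\frac{1}{v+1}-a_2\right)^2+\left(\frac{1}{h+1}-a_1\right)^2$. The paper asserts (via a Lagrange multiplier computation whose details it omits) that subject to $v+h=k$ this length is minimized exactly at $v=h$, where it equals $c'p$; hence for $v\neq h$ the balanced diagonal rectangle fits strictly inside the skew cell, forcing a strictly larger scale factor there. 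This converts the incomparability you identified into a one-sided length comparison, which is precisely the concrete form your ``comparison lemma'' would need to take. One point in your proposal is actually sharper than the paper: you flag the mixed case in which only the balanced square cell is in its diagonal branch. The paper's third-branch bound $B_k(p)\le\frac{1}{(k/2+1)p}$ fails in that sub-case (once $p>1+\sqrt{2}$ one has $\mathcal{C}_1(p)>1/p$, so $B_k(p)>\frac{1}{(k/2+1)p}$ there), so that range genuinely requires the diagonal comparison as well --- but flagging the case is not the same as resolving it, and your proposal resolves neither it nor the pure diagonal case.
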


\begin{proof}
First, notice that regularly spaced lines give a smaller scale factor than irregularly
 spaced lines.
This is because a rectangular hole generated by regular spacing
has height and width equal to the average. 
A rectangular hole generated by irregular spacing has
a hole with height and width greater than or equal to the average.

Consider any axis aligned net over the square,
let $v$ be the number of vertical lines and $h$ be the number
of horizontal lines call the net $\mathcal{N}(v,h)$. Notice $v+h=k$.
If $v=n$ then we have a $\frac{k}{2}\times \frac{k}{2}$ net
and if $n=0$ (or $v=0$), then we have $k$ parallel lines.
The base curve is defined to be the minimum scale factor of
 $\mathcal{N}(\frac{k}{2},\frac{k}{2})$ and $\mathcal{N}(k,0)$.
Thus, the scale factor of the base curve is
less than or equal to either of these nets.

Consider any other $v$ and  $h$.
Without loss of generality,
assume $h<v$ we have $0<h<\frac{k}{2}<v<k.$
There are $(v+1)(h+1)$ holes in the net and 
the average size of a hole is $\frac{1}{v+1}\times \frac{1}{h+1}$.
There exists one hole at least as big as the average,
so, we have a hole at least as big as the hole with aspect ratio 
$\frac{v+1}{h+1}$ scaled so the smaller side has 
length $\frac{1}{v+1}$. 
The optimal scale factor of a rectangle with aspect ratio $p$ that fits
inside this hole is given by
$\frac{1}{v+1}\mathcal{C}_{\frac{v+1}{h+1}}(p)$.
We compare $\frac{1}{v+1}\mathcal{C}_{\frac{v+1}{h+1}}(p)$
to $B_k(p)$. 
Both $\frac{1}{v+1}\mathcal{C}_{\frac{v+1}{h+1}}(p)$
and $B_k(p)$ are piecewise functions,
we directly examine all $p\geq 1$ to show
$B_k(p)\leq\frac{1}{v+1}\mathcal{C}_{\frac{v+1}{h+1}}(p)$, see
\figref{optimal} for intuition.

For $1\leq p\leq \frac{k+1}{\frac{k}{2}+1}$,
we have 
$B_k(p)=\frac{1}{k+1}$
and $\frac{1}{v+1}\mathcal{C}_{\frac{v+1}{h+1}}(p)=\frac{1}{v+1}.$ 
since $v<k$, we have $\frac{1}{k+1}<\frac{1}{v+1}$.
For $\frac{k+1}{\frac{k}{2}+1}\leq p\leq \frac{v+1}{h+1}$,  $B_k(p)$ 
decreases and $\frac{1}{v+1}\mathcal{C}_{\frac{v+1}{h+1}}(p)=\frac{1}{k+1}$ is constant.

For $\frac{v+1}{h+1}<p\leq w_{\frac{v+1}{h+1}}$, 
$B_k(p) =\min\bigg\{\frac{1}{k+1}\mathcal{C}_{k+1}(p), \frac{1}{\frac{k}{2}+1} \mathcal{C}_1(p) \bigg\}
\leq
\left(\frac{1}{\frac{k}{2}+1}\right)\left(\frac{1}{p}\right)$
and~$\frac{1}{v+1}\mathcal{C}_{\frac{v+1}{h+1}}(p)=\frac{1}{v+1}\left(\frac{v+1}{h+1}\right)\frac{1}{p}.$
Since $h\leq \frac{k}{2}$,
we have 
$$\left(\frac{1}{\frac{k}{2}+1}\right)\left(\frac{1}{p}\right)<\frac{1}{h+1}\left(\frac{1}{p}\right)=\frac{1}{v+1}\left(\frac{v+1}{h+1}\right)\frac{1}{p}.$$

For $p\geq w_{\frac{v+1}{h+1}}$ the interior rectangle is placed
diagonally. 
Let $c'$ be the scale factor value of the rectangle placed diagonally in $\mathcal{N}(\frac{k}{2},\frac{k}{2}).$
Consider the length of the rectangle, with shorter side equal to $c'$,
 placed diagonally in a rectangle with
sides $\frac{1}{v+1}\times\frac{1}{h+1}.$
Let $a_1$ and $a_2$ be the legs of the small
right triangle formed by $c'$,
the squared length of this inscribed rectangle
is $\ell^2(v,h,a_1,a_2)=(\frac{1}{v+1}-a_2)^2+(\frac{1}{h+1}-a_1)^2$.
The minimum of this function along the constraints 
\eqnref{similar}, \eqnref{right}, and $v+h=k$
occurs when $v=h$. We omit the details, but this can be done
with Lagrange multipliers.
So, when $v\neq h$ we can fit the diagonal rectangle of 
$\mathcal{N}(\frac{k}{2},\frac{k}{2})$ inside the diagonal of the 
rectangle with dimensions $\frac{1}{v+1}\times\frac{1}{h+1}$
so the optimal scale factor must be~larger.
\end{proof}

\begin{figure}[htb]
        \centering
    
        \includegraphics[width=.36\textwidth]{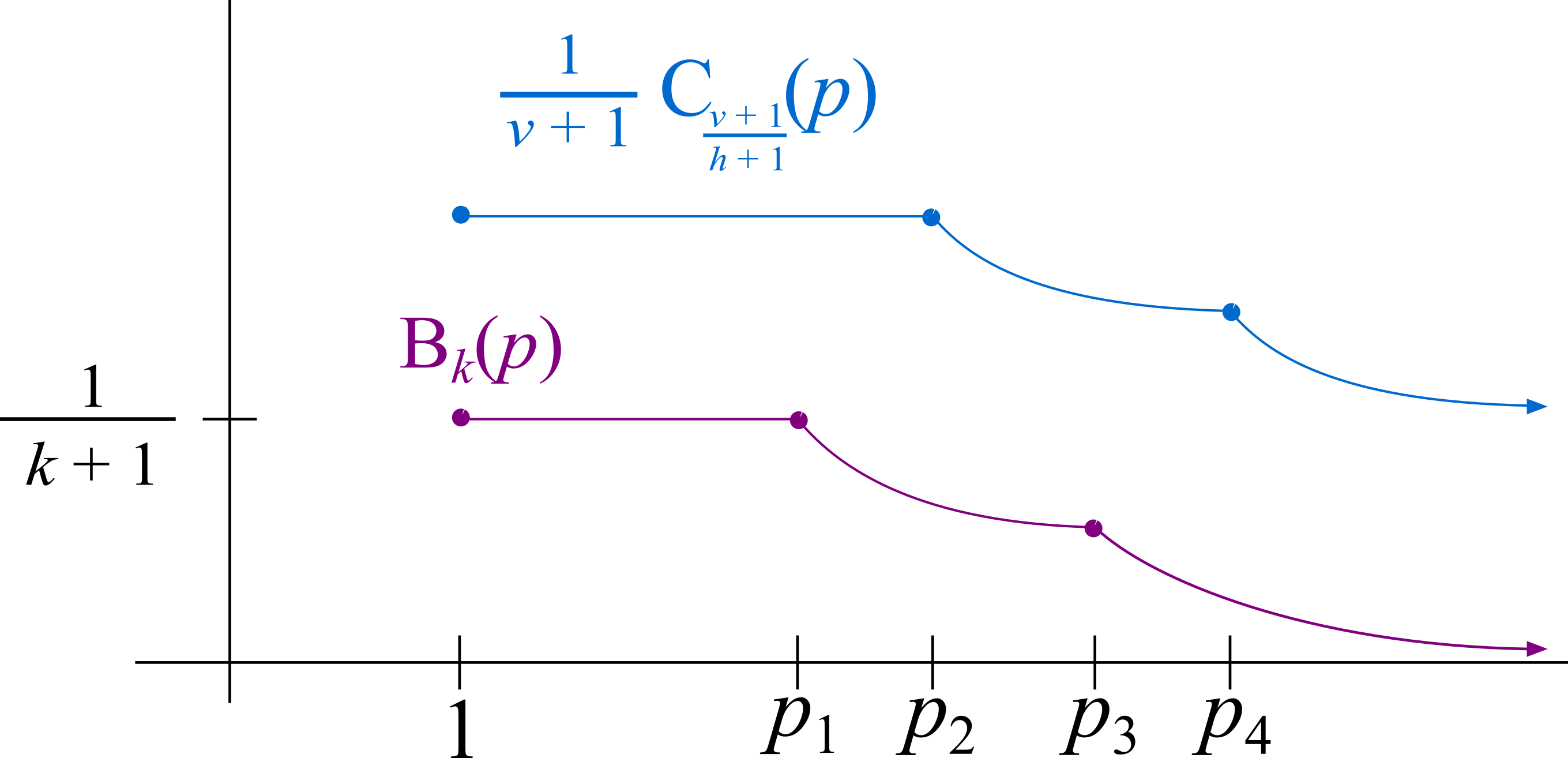}

		\caption{The curves $B_k(p)$ and $ \frac{1}{v+1}\mathcal{C}_{\frac{v+1}{h+1}}(p)$. Here, $p_1 = \frac{k+1}{\frac{k}{2}+1}$, $p_2 = \frac{v+1}{h+1}$, $p_3 = \sqrt{2}+1$, and $p_4 = w_{\frac{v+1}{h+1}}$.}
			\label{fig:optimal}
\end{figure}

\section{Discussion}

\label{sec:discuss}

In this work, we showed that for axis-aligned nets with $k$ lines and rectangular
intruder with aspect ratio $1\times p$, the optimal net is evenly spaced parallel lines for
$p\leq \frac{k+1}{\frac{k}{2}+1}$ and a grid of $\frac{k}{2}\times \frac{k}{2}$ evenly 
spaced lines for $p\geq \frac{k+1}{\frac{k}{2}+1}.$
As far as we know, the problem is still open for non axis-aligned nets and rectangular intruders.
Our hope is that someone can show, that for any net, 
one can construct an axis-aligned
net that does change the optimal scale factor. 
Non-rectangular intruders would also be interesting to explore.

\small{
\bibliographystyle{siam}
\bibliography{biblio}

\begin{thebibliography}{1}

\bibitem{aghamolaei-capfwamn-20}
{\sc S.~Aghamolaei}, {\em Catching a polygonal fish with a minimum net},
  arXiv:2008.06337,  (2020).

\bibitem{ADS-95}
{\sc H.~Alt, D.~Hsu, and J.~Snoeyink}, {\em Computing the largest inscribed
  isothetic rectangle}, 7th Canad. Conf. Comput. Geom,  (1995), pp.~67--72.

\bibitem{ArkinD0GMPT20}
{\sc E.~M. Arkin, R.~Das, J.~Gao, M.~Goswami, J.~S.~B. Mitchell, V.~Polishchuk,
  and C.~D. T{\'{o}}th}, {\em Cutting polygons into small pieces with chords:
  Laser-based localization}, 28th Euro. Symp. on Algorithms, 173 (2020),
  pp.~7:1--7:23.

\bibitem{Carver-1957}
{\sc W.~Carver}, {\em Solution to problem {E}1225}, Amer. Math. Monthly, 64
  (1957), pp.~114--116.

\bibitem{Dunkel-1920}
{\sc O.~Dunkel}, {\em Solution {I} to problem 416}, Amer. Math. Monthly, 27
  (1920), pp.~327--330.

\bibitem{wetzel-00}
{\sc J.~E. Wetzel}, {\em Rectangles in rectangles}, Mathematics Magazine, 73
  (2000), pp.~204--211.

\bibitem{boundary-lasers}
{\sc Y.~Zheng, D.~J. Brady, and P.~K. Agarwal}, {\em Localization using
  boundary sensors: An analysis based on graph theory}, ACM Trans. Sen. Netw.,
  3 (2007), pp.~21--es.

\end{thebibliography}
}
\appendix
\section{An odd number of lines}
\label{append:odd}

In this section, we prove that, when $k$ is odd,
the net $\mathcal{N}(k,0)$ is optimal for $1\leq p\leq \frac{(k+1)\lfloor\frac{k}{2}\rfloor}{\lceil\frac{k}{2} \rceil^2}$
and the net $\mathcal{N}(\lceil\frac{k}{2}\rceil,\lfloor\frac{k}{2}\rfloor)$ is optimal
for $p\geq  \frac{(k+1)\lfloor\frac{k}{2}\rfloor}{\lceil\frac{k}{2} \rceil^2}.$
The net $\mathcal{N}(\lceil\frac{k}{2}\rceil,\lfloor\frac{k}{2}\rfloor)$
has $\frac{1}{\lceil\frac{k}{2}\rceil }\times  \frac{1}{\lfloor\frac{k}{2}\rfloor }$ holes
and the net $\mathcal{N}(k,0)$ has $1\times \frac{1}{k+1}$ holes.
We consider the curves $\frac{1}{\lceil\frac{k}{2}\rceil} \mathcal{C}_{\frac{\lceil\frac{k}{2}\rceil}{\lfloor\frac{k}{2}\rfloor}}(p)$ and
$\frac{1}{k+1}\mathcal{C}_{k+1}(p)$. See \figref{odd}.
These curves 
intersect at $p=\frac{(k+1)\lfloor\frac{k}{2}\rfloor}{\lceil\frac{k}{2} \rceil^2}.$
We define the \emph{base curve} for $k$ odd to be

\[ D_k(p) = \min\Big\{  \frac{1}{k+1}\mathcal{C}_{k+1}(p), \frac{1}{\lceil\frac{k}{2}\rceil} \mathcal{C}_{\frac{\lceil\frac{k}{2}\rceil}{\lfloor\frac{k}{2}\rfloor}}(p)\Big\}.
\]

\begin{figure}[htb]
        \centering
    
        \includegraphics[width=.4\textwidth]{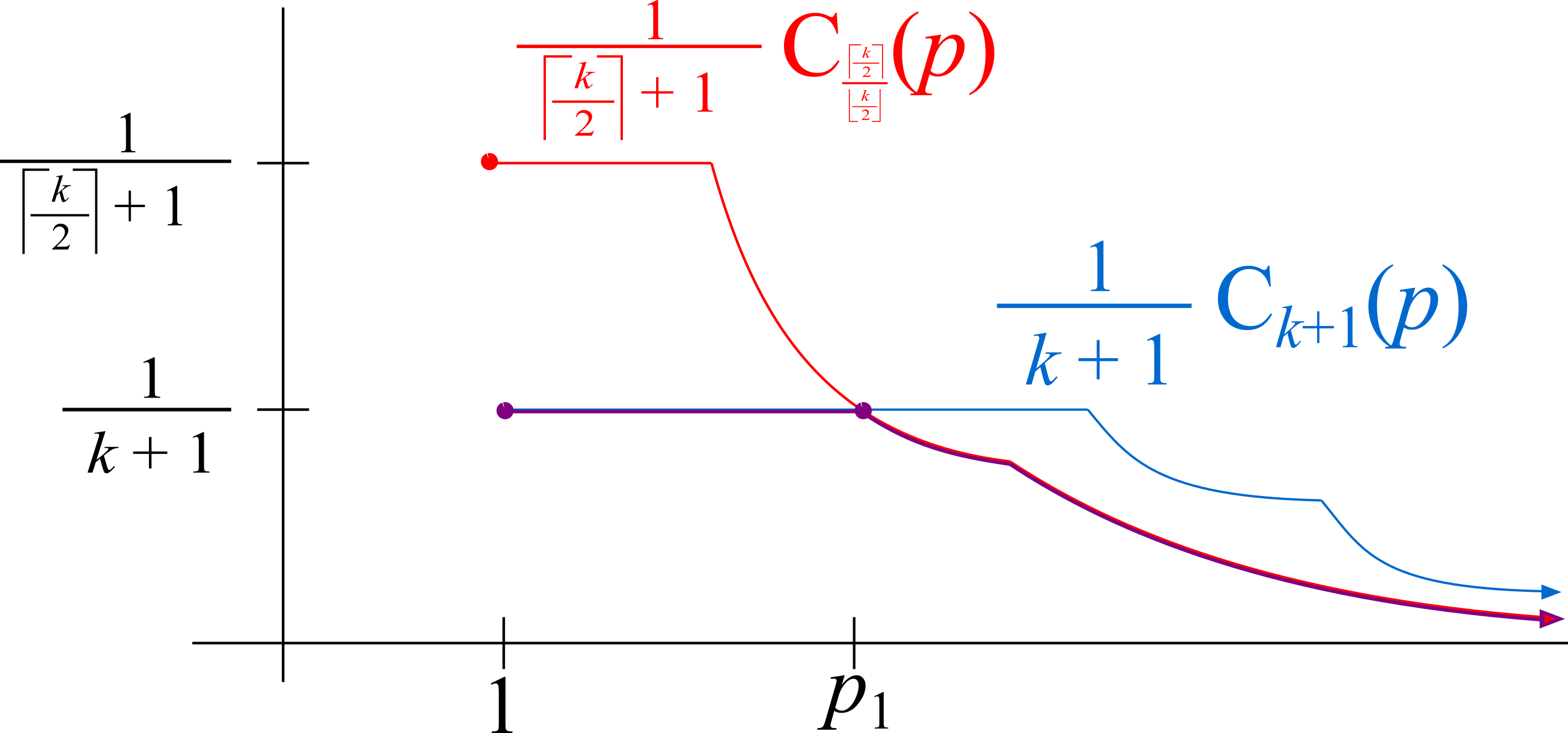}

		\caption{The minimum curve for $k$ odd, with $p_1 = \frac{(k+1)\lfloor\frac{k}{2}\rfloor}{\lceil\frac{k}{2} \rceil^2}$.}
			\label{fig:odd}
\end{figure}

\begin{theorem}[Regular is Optimal Odd]\label{thm:optimal-odd}
For $k$ odd, the optimal axis aligned net for rectangular polygons is
$\mathcal{N}(k,0)$ for aspect ratio $p\leq\frac{(k+1)\lfloor\frac{k}{2}\rfloor}{\lceil\frac{k}{2} \rceil^2}$
and $\mathcal{N}(\lceil\frac{k}{2}\rceil,\lfloor\frac{k}{2}\rfloor)$ for $p\geq\frac{(k+1)\lfloor\frac{k}{2}\rfloor}{\lceil\frac{k}{2} \rceil^2}$.
\end{theorem}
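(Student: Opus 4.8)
The plan is to reproduce the argument of Theorem~\ref{thm:optimal} line for line, with the balanced grid $\mathcal{N}(\frac{k}{2},\frac{k}{2})$ replaced by the nearly balanced net $\mathcal{N}(\lceil\frac{k}{2}\rceil,\lfloor\frac{k}{2}\rfloor)$ and the base curve $B_k$ replaced by $D_k$. The two reductions carry over unchanged: regular spacing beats irregular spacing because a regularly spaced net realizes a hole whose sides equal the averages while an irregular net always produces a hole with some side at least the average, and for any regular net $\mathcal{N}(v,h)$ with $v+h=k$ the averaging argument guarantees a hole at least as large as $\frac{1}{v+1}\times\frac{1}{h+1}$, so the optimal scale factor of $\mathcal{N}(v,h)$ is bounded below by $\frac{1}{v+1}\mathcal{C}_{\frac{v+1}{h+1}}(p)$. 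Since $k$ is odd we always have $v\ne h$; taking $h<v$ without loss of generality puts us in the range $0\le h<\lfloor\frac{k}{2}\rfloor$ and $\lceil\frac{k}{2}\rceil<v\le k$ once the two boundary configurations $(k,0)$ and $(\lceil\frac{k}{2}\rceil,\lfloor\frac{k}{2}\rfloor)$ are set aside, and those two are exactly the curves whose minimum defines $D_k$, hence dominated by $D_k$ by definition. It therefore suffices to prove $D_k(p)\le\frac{1}{v+1}\mathcal{C}_{\frac{v+1}{h+1}}(p)$ for every remaining admissible pair and all $p\ge1$.

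Next I would perform the piecewise comparison on the common refinement of the breakpoints of the two curves, exactly as in the even case and with the same three types of estimate. Where both copies of $\mathcal{C}$ equal $1$, the competitor is the constant $\frac{1}{v+1}$ and $D_k$ equals the constant $\frac{1}{k+1}$, and $v<k$ gives the bound. On the interval where the competitor is still constant at $\frac{1}{v+1}$ but $D_k$ has switched to its decreasing grid branch, the inequality is preserved because $D_k$ only decreases from a value already below $\frac{1}{v+1}$. On the interval where both curves lie on their $\frac{n}{p}$ branches, the competitor equals $\frac{1}{(h+1)p}$ while $D_k$ is at most $\frac{1}{\lfloor k/2\rfloor\,p}$, and the strict inequality $h<\lfloor\frac{k}{2}\rfloor$ for every non-balanced net yields the claim. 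The only change from the even argument in these three regions is the systematic replacement of $\frac{k}{2}$ by $\lfloor\frac{k}{2}\rfloor$.

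The crux, and the one place where oddness requires a genuinely new idea, is the diagonal region $p\ge w_{\frac{v+1}{h+1}}$. Here I would reuse the squared-length function $\ell^2(v,h,a_1,a_2)=\bigl(\frac{1}{v+1}-a_2\bigr)^2+\bigl(\frac{1}{h+1}-a_1\bigr)^2$ subject to the similarity relation \eqnref{similar}, the right-triangle relation \eqnref{right}, and $v+h=k$, whose constrained minimum the even proof locates at $v=h$ via Lagrange multipliers. Since $v=h=\frac{k}{2}$ is not an integer configuration when $k$ is odd, the new step is to show that the constrained optimal value of $\ell^2$, regarded as a function of the relaxed continuous variable $v$ on $[\frac{k}{2},k]$ with $h=k-v$, is monotonically increasing away from the minimizer $v=\frac{k}{2}$; the nearest attainable integer point is then $v=\lceil\frac{k}{2}\rceil$, $h=\lfloor\frac{k}{2}\rfloor$. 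This gives that the diagonally inscribed rectangle of $\mathcal{N}(\lceil\frac{k}{2}\rceil,\lfloor\frac{k}{2}\rfloor)$ fits strictly inside the diagonal of any hole $\frac{1}{v+1}\times\frac{1}{h+1}$ with $v>\lceil\frac{k}{2}\rceil$, forcing the competitor's scale factor to be larger and closing the last region. I expect the monotonicity of this constrained value along the half-line $v\ge\frac{k}{2}$ --- rather than re-solving the optimization at each integer --- to be the main technical obstacle, and I would establish it by differentiating the optimal value (equivalently, examining the sign of the associated multiplier) instead of recomputing $a_1,a_2$ explicitly.
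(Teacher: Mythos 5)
Your proposal follows essentially the same route as the paper's own proof: the same reduction to evenly spaced nets via the averaging argument, the same definition-based domination of the two extreme nets $\mathcal{N}(k,0)$ and $\mathcal{N}(\lceil\frac{k}{2}\rceil,\lfloor\frac{k}{2}\rfloor)$, the same region-by-region comparison of $D_k(p)$ against $\frac{1}{v+1}\mathcal{C}_{\frac{v+1}{h+1}}(p)$, and the same constrained optimization of $\ell^2$ in the diagonal region concluding that the best attainable integer split is $v=\lceil\frac{k}{2}\rceil$, $h=\lfloor\frac{k}{2}\rfloor$. If anything, your treatment of the diagonal region is more careful than the paper's: the paper passes directly from ``the minimum occurs at $v=h$ and is global'' to ``so the nearest integer point wins,'' whereas you correctly identify that this step requires monotonicity of the constrained optimal value in $v$ away from $v=\frac{k}{2}$ and propose to prove it.
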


\begin{proof}

Consider any axis aligned net over the square,
let $v$ be the number of vertical lines and $h$ be the number
of horizontal lines call the net $\mathcal{N}(v,h)$. Notice $v+h=k$.

Recall evenly spaced lines give a smaller scale factor than irregularly spaced lines.
If $v=\lceil\frac{k}{2}\rceil$ and $ \lfloor\frac{k}{2}\rfloor=n$ then we have a 
$\lceil\frac{k}{2}\rceil\times \lfloor\frac{k}{2}\rfloor$ net.
If $n=0$ (or $v=0$), then we have $k$ parallel lines.
The base curve is defined to be the minimum scale factor of
$\mathcal{N}(k,0)$ and $\mathcal{N}(\lceil\frac{k}{2}\rceil,\lfloor\frac{k}{2}\rfloor)$
so the base curve has scale factor less than or equal to $\mathcal{N}(k,0).$

Consider any other $v$ and  $h$.
Without loss of generality
assume $h<v$ we have
$0<h< \lfloor\frac{k}{2}\rfloor<\lceil \frac{k}{2} \rceil< v<k$

There are $(v+1)(h+1)$ holes in the net and 
the average size of a hole is $\frac{1}{v+1}\times \frac{1}{h+1}$.
There exits one hole at least as big as the average,
that is, with width at least $\frac{1}{v+1}$ and height
at least $\frac{1}{h+1}$. So we have a hole with aspect
ratio $\frac{v+1}{h+1}$ scaled so the smaller side has 
length $\frac{1}{v+1}$. 
The maximum scale factor of a rectangle with aspect ratio $p$ that fits
inside this hole is given by
$\frac{1}{v+1}\mathcal{C}_{\frac{v+1}{h+1}}(p)$.
We compare $\frac{1}{v+1}\mathcal{C}_{\frac{v+1}{h+1}}(p)$
to $D_k(p)$. 
Both $\frac{1}{v+1}\mathcal{C}_{\frac{v+1}{h+1}}(p)$
and $D_k(p)$ are piecewise functions,
we directly examine all $p\geq 1$ to show
$D_k(p)<\frac{1}{v+1}\mathcal{C}_{\frac{v+1}{h+1}}(p).$ 

For $1\leq p\leq \frac{(k+1)\lfloor\frac{k}{2}\rfloor}{\lceil\frac{k}{2} \rceil^2}$,
we have 
$D_k(p)=\frac{1}{k+1}$ and $\frac{1}{v+1}\mathcal{C}_{\frac{v+1}{h+1}}(p)=\frac{1}{v+1}.$ 
since $v<k$, we have $\frac{1}{k+1}<\frac{1}{v+1}$.
Then, for $\frac{(k+1)\lfloor\frac{k}{2}\rfloor}{\lceil\frac{k}{2} \rceil^2}\leq p\leq \frac{v+1}{h+1}$,  $D_k(p)$ 
decreases and $\frac{1}{v+1}\mathcal{C}_{\frac{v+1}{h+1}}(p)=\frac{1}{k+1}$ is constant.

For $\frac{v+1}{h+1}<p\leq w_{\frac{v+1}{h+1}}$, 
$D_k(p)=\min\Big\{  \frac{1}{k+1}\mathcal{C}_{k+1}(p), \frac{1}{\lceil\frac{k}{2}\rceil} \mathcal{C}_{\frac{\lceil\frac{k}{2}\rceil}{\lfloor\frac{k}{2}\rfloor}}(p)\Big\}
\leq
\frac{1}{\lceil\frac{k}{2}\rceil}\left(\frac{1}{p}\right)$
and

$\frac{1}{v+1}\mathcal{C}_{\frac{v+1}{h+1}}(p)=\frac{1}{v+1}\left(\frac{v+1}{h+1}\right)\frac{1}{p}.$
Since $h\leq \lfloor\frac{k}{2}\rfloor$,
we have 
$$\left(\frac{1}{\lceil\frac{k}{2}\rceil}\right)\left(\frac{1}{p}\right)<\frac{1}{h+1}\frac{1}{p}=\frac{1}{v+1}\left(\frac{v+1}{h+1}\right)\frac{1}{p}.$$

For $p\geq w_{\frac{v+1}{h+1}}$, we again solve the same constrained
optimization problem as in the even case. The minimum occurs when $v=h$.
This is a global minimum, so in the odd case the
minimum occurs when me make $v$ as close to $h$ as possible.
So, when $h< \lfloor\frac{k}{2}\rfloor<\lceil \frac{k}{2} \rceil< v$ 
we can fit the diagonal rectangle of 
$\mathcal{N}(\lceil\frac{k}{2}\rceil,\lfloor\frac{k}{2}\rfloor)$ inside the diagonal of the 
rectangle with dimensions $\frac{1}{v+1}\times\frac{1}{h+1}$
and the optimal scale factor must be larger.

\end{proof}

\section{Towards Axis-Aligned Net Optimality}
\label{append:append-axis}
In this appendix, we show that, for square
intruder, evenly spaced axis-aligned vertical lines are generally a local optimum.
This is a first step toward our conjecture that axis-aligned lines are
a global optimum for rectangular intruder.

\begin{lemma}\label{lem:local}
    Let $P$ be a intruder with aspect ration $p=1$, i.e., a square. Then evenly
    spaced axis-aligned vertical lines are a local optimum when the number of
    lines is $k >2$.
\end{lemma}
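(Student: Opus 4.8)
The plan is to reduce the claim to a one-dimensional feasibility statement and then rule out the bad case. Write $s=\frac{1}{k+1}$ for the side of the maximal empty square of the even configuration; at that configuration $s$ is attained inside each of the $k+1$ vertical strips by a whole vertical family of squares. I would parametrize a perturbation by giving each interior line $L_j$ ($1\le j\le k$) its $x$-coordinate $\frac{j}{k+1}+\alpha_j$ at $y=0$ and $\frac{j}{k+1}+\beta_j$ at $y=1$, with $\alpha_j,\beta_j$ small and the two walls $L_0\colon x=0$ and $L_{k+1}\colon x=1$ held fixed. Since near-vertical lines meet only far outside the unit square, the arrangement still cuts $[0,1]^2$ into exactly $k+1$ quadrilateral strips in the same left-to-right order, so it suffices to exhibit, for every such perturbation, one strip that still contains an axis-aligned empty square of side at least $s$; allowing the intruder to rotate can only enlarge the fittable square, so axis-aligned placements are sound for this lower bound.

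Let $\mathrm{sh}_j(y)=(1-y)\alpha_j+y\beta_j$ be the horizontal displacement of $L_j$, so that $\mathrm{sh}_0\equiv\mathrm{sh}_{k+1}\equiv0$. An axis-aligned square of side $s$ sits in strip $i$ over the band $[y_0,y_0+s]$ exactly when the effective width $g_i(y_0)=\bigl(\frac{i+1}{k+1}+P_{i+1}(y_0)\bigr)-\bigl(\frac{i}{k+1}+Q_i(y_0)\bigr)$ is at least $s$, where $P_j$ and $Q_j$ denote the minimum and maximum of $\mathrm{sh}_j$ over the band. Cancelling $\frac{1}{k+1}$, the goal becomes: find $i\in\{0,\dots,k\}$ and $y_0\in[0,1-s]$ with $P_{i+1}(y_0)\ge Q_i(y_0)$. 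First I would record the conservation identity $\sum_{i=0}^{k} g_i(y_0)=1-s\sum_{j=1}^{k}|\beta_j-\alpha_j|$, obtained by telescoping against the two fixed walls. This identity already shows the \emph{average} effective width drops strictly below $s$ the moment any line is tilted, so a naive averaging argument cannot succeed and the fixed walls must be exploited directly.

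The heart of the argument is to contradict the assumption that $P_{i+1}(y_0)<Q_i(y_0)$ for every $i$ and every $y_0$. The two boundary strips already constrain the outer lines: $i=0$ forces $P_1(y_0)<0$ for all bands (so $\mathrm{sh}_1$ is negative somewhere in every band), and $i=k$ forces $\mathrm{sh}_k$ to be positive somewhere in every band. Because each $\mathrm{sh}_j$ is affine in $y$, it is monotone and changes sign at most once, so ``negative in every band'' and ``positive in every band'' are strong restrictions, which I would turn into explicit inequalities on $\alpha_j,\beta_j$ by testing the extreme bands $[0,s]$ and $[1-s,1]$. I would then chain the interior inequalities $P_{i+1}<Q_i$ against the wall values $\mathrm{sh}_0=\mathrm{sh}_{k+1}=0$ to reach a telescoping contradiction. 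I expect this to be the main obstacle: tilting genuinely narrows the interior strips, so the contradiction cannot come from counting alone and must capture that a boundary line, on the band where it bulges away from its wall, widens the adjacent boundary strip back up to $s$. Quantifying this slack is also where I expect the hypothesis $k>2$ to enter, since for small $k$ the bands of length $\frac{1}{k+1}$ overlap so heavily that the monotonicity slack the argument relies on collapses.

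Two routine points would then close the proof. Nondegeneracy of the perturbed arrangement (no interior crossings, hence exactly $k+1$ strip cells) follows from the near-parallel estimate used in the setup, and the reduction to axis-aligned squares is justified by the rotation remark above. As an alternative to the combinatorial core, one could instead mimic the Lagrange-multiplier computation of Theorem~\ref{thm:optimal} and argue that the largest-empty-square side, viewed as a function of the $2k$ perturbation parameters, is stationary with nonnegative second variation at the even configuration; I prefer the feasibility argument above, however, since it sidesteps the nonsmoothness of the max-over-strips objective.
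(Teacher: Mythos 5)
Your setup is sound, and in places sharper than the paper's own proof: encoding each perturbed line by its endpoint displacements $\alpha_j,\beta_j$ (which subsumes the paper's separate shift and pivot parameters), reducing to axis-aligned squares, and reformulating ``some strip still admits a square of side $s=\frac{1}{k+1}$'' as the existence of $i$ and $y_0$ with $P_{i+1}(y_0)\ge Q_i(y_0)$ are all correct, as is your conservation identity showing that averaging over strips cannot close the argument. But there is a genuine gap exactly where you flag one: you never derive a contradiction from the assumption that $P_{i+1}(y_0)<Q_i(y_0)$ for all $i$ and all $y_0$. What you actually establish is only the first step --- the two boundary strips force $\mathrm{sh}_1$ to be negative somewhere in every band and $\mathrm{sh}_k$ to be positive somewhere in every band --- and the ``telescoping contradiction'' that is supposed to finish the proof is deferred with ``I would'' and ``I expect.'' Your own identity shows this step cannot be a formality, since summing the failure inequalities yields no contradiction. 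That missing chain is precisely the content of the paper's proof, which first rules out shifts and then argues by induction inward from the walls: a pivoted line adjacent to a vertical line creates a face admitting a square larger than $s$, forcing $p_1=p_k=0$, then $p_2=p_{k-1}=0$, and so on until all pivots vanish.

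For what it is worth, your framework can be completed by a propagation argument that parallels that induction. If strip $0$ fails, then since $\mathrm{sh}_1$ is affine (hence monotone), ``negative somewhere in every band'' forces $\mathrm{sh}_1<0$ on a closed interval $I_1$ of length $1-s$. Inductively, if $\mathrm{sh}_j<0$ on an interval $I_j$ of length $1-js\ge s$, then failure of strip $j$, applied only to bands inside $I_j$ (where $\max\mathrm{sh}_j<0$), gives $\mathrm{sh}_{j+1}<0$ somewhere in every such band, and monotonicity again yields an interval $I_{j+1}\subseteq I_j$ of length $1-(j+1)s$ on which $\mathrm{sh}_{j+1}<0$. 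Since $1-ks=s$, after $k$ steps one retains a full band on which $\mathrm{sh}_k<0$, contradicting the right-wall requirement that $\mathrm{sh}_k$ be positive somewhere in every band. Note that this needs no quantification of slack and works for every $k\ge 1$, so your guess that $k>2$ enters through collapsing monotonicity slack is not borne out; that hypothesis appears to be an artifact of the paper's face-by-face reasoning rather than something your approach must reproduce. As written, however, your proposal is a proof plan whose central step is missing, not a proof.
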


\begin{proof} 
    Denote the $k$ lines from left to right by $\ell_1, \ell_2, \ldots, \ell_k$.
    Assume, towards a contradiction, that, given some small $\epsilon > 0$,
    there are small shift and pivot values for each line that describe the
    translation of lines to an arrangement that results in a lower overall
    scaling factor $c$.  Specifically, let these values be denoted $s_1, s_2,
    \ldots, s_k$ and $p_1, p_2, \ldots , p_k$, respectively, where  $s_i, p_i
    \in [0, \epsilon]$, and at least one of these values $s_i$ or $p_i$ are
    nonzero.  Notice that, regardless of the pivot value, shifting any line
    decreases the maximum intruder size for one neighboring face, but increases it
    for the other neighboring face, leading to a higher $c$-value overall. Thus,
    we must have $s_1 = s_2 = \ldots = s_k$ = 0. This means we must consider the
    effect of pivot values on unshifted lines. Notice that pivoting a single
    line that is adjacent to a vertical (un-pivoted) line will increase the
    maximum intruder size for the face between them. Since the lines $p_1$ and $p_k$
    are always adjacent to the vertical edges of the bounding square, we must
    have $p_1 = p_k = 0$, otherwise the leftmost and rightmost faces would cause
    the arrangement to have a higher $c$-value. But then we must also have $p_2
    = p_{k-1} = 0$, or else the second to leftmost and second to rightmost faces
    would cause the arrangement to have a higher $c$-value. Continuing this line
    of argument, we eventually see that $p_1  =p_2 = \ldots = p_k = 0$,
    contradicting our assumption that some $s_i$ or $p_i$ be nonzero.
\end{proof}

\end{document}